\documentclass{llncs}

\usepackage{amsmath,amssymb,stmaryrd,upgreek,pifont,rotating}
\usepackage{graphicx}
\usepackage{color}

\renewcommand{\phi}{\varphi}

\def\squareforqed{\hbox{\rlap{$\sqcap$}$\sqcup$}}
\def\qed{\ifmmode\squareforqed\else{\unskip\nobreak\hfil
\penalty50\hskip1em\null\nobreak\hfil\squareforqed
\parfillskip=0pt\finalhyphendemerits=0\endgraf}\fi}


\newcommand{\sk}{\mathsf{skip}}
\newcommand{\ite}[3]{\mathsf{if}\  #1\  \mathsf{then}\  #2\  \mathsf {else}\  #3\ }
\newcommand{\while}[2]{\mathsf{while}\  #1\  \mathsf{do}\  #2\ }

\newcommand{\h}{\mathsf{H} }
\newcommand{\U}{\mathsf{U} }
\newcommand{\phase}{\mathsf{T} }
\newcommand{\cnot}{\mathsf{CNot} }
\newcommand{\true}{\mathsf{true} }
\newcommand{\false}{\mathsf{false} }

\newcommand{\den}[1]{\llbracket #1 \rrbracket}
\newcommand{\dena}[1]{\den{#1}^\natural}

\newcommand{\Fa}{F^\natural}

\newcommand{\prog}[2]{\langle #1,#2\rangle}

\newcommand{\id}{\mathbb I}

\newcommand{\trace}{\textup{tr}}
\newcommand{\proj}{\textup P}

\newcommand{\RR}{\Uppi}
\newcommand{\AAA}{\mathcal A}
\newcommand{\joina}{\vee}
\newcommand{\meeta}{\wedge}
\newcommand{\Joina}{\bigvee}

\newcommand{\poa}{\le}

\newcommand{\DD}{\mathcal D}

\sloppy


\title{Quantum Entanglement Analysis based~on Abstract~Interpretation} 
\author{Simon Perdrix}
\institute{Oxford University Computing Laboratory\\\email{simon.perdrix@comlab.ox.ac.uk}}
\begin{document}

\maketitle

\begin{abstract}
Entanglement is a non local property of quantum states which has no classical counterpart and plays a decisive role in quantum information theory. Several protocols, like the teleportation, are based on quantum entangled states. Moreover, any quantum algorithm which does not create entanglement can be efficiently simulated on a classical computer. The exact role of the entanglement is nevertheless not well understood. Since an exact analysis of entanglement evolution induces an exponential slowdown, we consider approximative analysis based on the framework of abstract interpretation. In this paper, a concrete quantum semantics based on superoperators is associated with a simple quantum programming language. The representation of entanglement, i.e. the design of the abstract domain is a key issue. A representation of entanglement as a partition of the memory is chosen. An abstract semantics is introduced, and the soundness of the approximation is proven.

\end{abstract}

\section{Introduction}

Quantum entanglement is a non local property of quantum mechanics. The entanglement reflects the ability of a quantum system composed of several subsystems, to be in a state which cannot be decomposed into the states of the subsystems. 
Entanglement is one of the properties of quantum mechanics which caused Einstein and others to dislike the theory. In 1935, Einstein, Podolsky, and Rosen formulated the EPR paradox \cite{EPR35a}. 

On the other hand, quantum mechanics has been highly successful in producing correct experimental predictions, and the strong correlations associated with the phenomenon of quantum entanglement have been observed indeed \cite{AGR81a}.

Entanglement leads to correlations between subsystems that can be
exploited in information theory (e.g., teleportation scheme
\cite{BBC+93a}). The entanglement plays also a decisive, but not yet well-understood, role in quantum computation, since any quantum algorithm can be efficiently simulated on a classical computer when the quantum memory is not entangled during all the computation. As a consequence, interesting quantum algorithms, like Shor's algorithm for factorisation \cite{shor94}, exploit this phenomenon.

In order to know what is the amount of entanglement of a quantum state, several measures of entanglement have been introduced (see for instance \cite{NC00}). Recent works consist in characterising, in the framework of the one-way quantum computation \cite{VMDB06}, the amount of entanglement necessary for a universal model of quantum computation. Notice that all these techniques consist in analysing the entanglement of a given state, starting with its mathematical description. 

In this paper, the  entanglement
\emph{evolution} during the computation is analysed. The  description of quantum
evolutions is done via a simple quantum programming language. 
The development of such quantum programming languages is recent, see \cite{SelingerP:brisqp,Gay06} for a survey on this topic.

An exact analysis of entanglement evolution induces
an exponential slowdown of the computation. 
Model checking techniques have been introduced
\cite{GRP07} including entanglement. Exponential slowdown of such analysis is avoided by reducing the domain to stabiliser states (i.e. a subset of
quantum states that can be efficiently simulated on a classical
computer). As a consequence, any quantum program that cannot be
efficiently simulated on a classical computer cannot be analysed. 

Prost and Zerrari \cite{PZ08} have recently introduced a logical entanglement analysis for functional languages. This logical framework allows analysis of higher-order functions, but does not provide any static analysis for the quantum programs without annotation. Moreover, only pure quantum states are considered.

In this paper, we introduce a novel approach of entanglement analysis 
based on the framework of abstract interpretation \cite{CC77}. A concrete quantum semantics based on superoperators is 
associated with a simple quantum programming language. The representation of 
entanglement, i.e. the design of the abstract domain is a key issue. A 
representation of entanglement as a partition of the memory is chosen. An 
abstract semantics is introduced, and the soundness of the approximation is 
proved.

\section{Basic Notions and Entanglement}

\subsection{Quantum Computing}

We briefly recall the basic definitions of quantum computing; please refer to Nielsen and Chuang \cite{NC00} for a complete introduction to the subject. 

The state of a quantum system can be described by a density matrix, i.e. a self adjoint\footnote{$M$ is self adjoint (or Hermitian) if and only if $M^\dagger =M$} positive-semidefinite\footnote{$M$ is positive-semidefinite if all the eigenvalues of $M$ are non-negative.} complex matrix of trace\footnote{The trace of $M$ ($\trace(M)$) is the sum of the diagonal elements of $M$} less than one. The set of density matrices of dimension $n$ is $D_{n} \subseteq \mathbb C^{n\times n}$.

The basic unit of information in quantum computation is a quantum bit
or \emph{qubit}. The state of a single qubit is described by a
$2\times 2$ density matrix $\rho \in D_{2}$. The state of a register
composed of $n$ qubits is a $2^n\times 2^n$ density matrix. If two 
registers $A$ and $B$ are in states $\rho_A \in D_{2^n}$ and $\rho_B
\in D_{2^m}$, the composed system $A,B$ is in state $\rho_A \otimes
\rho_B\in D_{2^{n+m}}$.

The basic operations on quantum states are unitary operations and measurements. A unitary 
operation maps an $n$-qubit state to an $n$-qubit state, and is given
by a $2^n\times 2^ n$-unitary matrix\footnote{$U$ is unitary if and
only if $U^\dagger U=UU^\dagger=\id$.}.  
If a system in state $\rho$ evolves according to a unitary
transformation $U$, the resulting density matrix is $U \rho
U^\dagger$. The parallel composition of two unitary transformations
$U_A$, $U_B$ is $U_A \otimes U_B$.

The following unitary transformations form an approximative universal family of unitary transformations, i.e. any unitary transformation can be approximated by composing the unitary transformations of the family \cite{NC00}.

$$H=\frac1{\sqrt 2}\left(\begin{array}{cc} 1&1\\ 1&-1\end{array}\right), T = \left(\begin{array}{cc} 1&0\\ 0&e^{i\pi/4}\end{array}\right),CNot =\left(\begin{array}{cccc} 1&0&0&0\\ 0&1&0&0\\0&0&0&1\\ 0&0&1&0\end{array}\right)$$

$$\sigma_{x}=\left(\begin{array}{cc} 0&1\\ 1&0\end{array}\right),\sigma_{y}=\left(\begin{array}{cc} 0&-i\\i&0\end{array}\right),\sigma_{z}=\left(\begin{array}{cc} 1&0\\ 0&-1\end{array}\right)$$

\ 

\ 

A measurement is described by a family of projectors
$\{\proj_{x},x \in X\}$ satisfying $\proj_i^2=\proj_i$, $\proj_i\proj_j=0$ if $i\neq j$,  
 and $\sum_{x\in
X}\proj_{x}=\id$. 
A computational basis measurement is $\{ \proj_k, 0\le k<
2^n\}$, where $\proj_k$ has $0$ entries everywhere except one $1$ at row
$k$, column $k$. The parallel composition of two measurements
$\{\proj_x,x \in X\}$, $\{\proj'_y, y \in Y\}$ is $\{\proj_x\otimes \proj'_y, (x,y)
\in X\times Y\}$.

According to a probabilistic interpretation, a measurement according
to $\{\proj_x,x\in X\}$ of a state $\rho$ produces the classical
outcome $x\in X$ with probability $\trace(\proj_x \rho \proj_x)$ and
transforms $\rho$ into $ \frac{1}{\trace(\proj_x \rho \proj_x)}\proj_x
\rho \proj_x$. 

Density matrices is a useful formalism for representing probability
distributions of quantum states, since the state $\rho$ of a system which is
in state $\rho_1$ (resp. $\rho_2$) with probability $p_1$
(resp. $p_2$) is $\rho= p_1 \rho_1 + p_2 \rho_2$. As a consequence, a measurement according to $\{\proj_{x},x \in X\}$
transforms $\rho$ into $\sum_{x \in X} \proj_{x} \rho \proj_{x}$.

Notice that the sequential compositions of two
measurements (or of a measurement and a unitary transformation) is no
more a measurement nor a unitary transformation, but a superoperator,
i.e. a trace-decreasing\footnote{$F$ is trace decreasing iff
$\trace(F(\rho))\le \trace(\rho)$ for any $\rho$ in the domain of $F$. Notice that
superoperators are sometimes defined as trace-perserving maps, 
however trace-decreasing is more suitable in a semantical context, see
\cite{Sel03} for details.} completely positive\footnote{$F$ is
positive if $F(\rho)$ is positive-semidefinite for any positive $\rho$ in the
domain of $F$. $F$ is completely positive if $\id_k\otimes F$ is positive
for any $k$, where $\id_k: \mathbb C^{k\times k} \to \mathbb C^{k\times
k}$ is the identity map.} linear map. Any quantum evolution can be
described by a superoperator. 

The ability to initialise any qubit in a given state $\rho_0$, to apply any
unitary transformation from a universal family, and to perform a
computational measurement are enough for  simulating any superoperator.

\subsection{Entanglement}
\label{sec:entanglement}

Quantum entanglement is a non local property which has no classical
counterpart. Intuitively, a quantum state of a system composed of
several subsystems is \emph{entangled} if it cannot be decomposed into the state
 of its subsystems. A quantum state which is not entangled is called
\emph{separable}. 

More precisely, for a given finite set of qubits $Q$, let $n=|Q|$. For a
given partition $A,B$ of $Q$, and a given $\rho \in D_{2^n}$, $\rho$
is biseparable according to $A,B$ (or $(A,B)$-separable for short) if and only if there exist $K$, $p_{k}\ge 0$, $\rho_{k}^{A}$ and $\rho_{k}^B$ such that $$\rho = \sum_{k\in K} p_{k}\rho^{A}_{k}\otimes \rho^{B}_{k}$$

$\rho$ is entangled according to the partition $A,B$ if and only if
$\rho$ is not $(A,B)$-separable. 

Notice that biseparability provides a very partial information about the
entanglement of a quantum state, for instance for a $3$-qubit state
$\rho$, which is $(\{1\},\{2,3\})$-separable, qubit $2$ and qubit $3$
may be entangled or not. 

One way to generalise the biseparability is to consider that a quantum
state is $\pi$-separable -- where $\pi=\{Q_j,j\in J\}$ is a
partition of $Q$ -- if and only if there exist $K$, $p_{k}\ge 0$, and $\rho_{k}^{Q_j}$ such that $$\rho = \sum_{k\in K}
p_{k}\left(\bigotimes_{j\in J}\rho^{Q_j}_{k}\right)$$

Notice that the structure of quantum entanglement presents some interesting and non trivial properties. For instance there exist some 3-qubit states $\rho$ such that $\rho$ is bi-separable for any bi-partition of the 3 qubits, but not fully separable i.e., separable according to the partition $\{\{1\},\{2\},\{3\}\}$. 
As a consequence, for a given quantum state, there is not necessary a \emph{best representation} of its  entanglement.

\subsection{Standard and diagonal basis}
\label{sec:basis}

For a given state $\rho \in \DD^Q$ and a given qubit $q\in Q$, if $\rho$ is $(\{q\},Q\setminus \{q\})$-separable, then $q$ is separated from the rest of the memory. Moreover, such a qubit may be a basis state in the standard basis ($\bf s$) or the diagonal basis ($\bf d$), meaning that the state of this qubit can be seen as a 'classical state' according to the corresponding basis.  

More formally, a qubit $q$ of $\rho$ is in the standard basis if there exists $p_0,p_1 \ge 0 $, and $\rho_0, \rho_1\in \DD^{Q\setminus \{q\}}$ such that $\rho=p_0P_q^\true\otimes \rho_0 + p_1P_q^\false\otimes \rho_1$. 
Equivalently, $q$ is in the standard basis if and only if $P^\true_q \rho P^\false_q= P^\false_q \rho P^\true_q = 0$. A qubit $q$ is in the diagonal basis in $\rho$ if and only if $q$ is in the standard basis in $H_q\rho H_q$. 

Notice that some states, like the maximally mixed 1-qubit state $\frac 12(P^\true+P^\false)$ are in both standard and diagonal basis, while others are neither in standard nor diagonal basis like the 1-qubit state $THP^\true HT$. 

We introduce a function $\beta:\DD^Q \to B^Q$, where $B^Q = Q\to \{{\bf s}, {\bf d}, \top, \bot\}$, such that $\beta(\rho)$ describes which qubits of $\rho$ are in the standard or diagonal basis:

\begin{definition}
For any finite $Q$, let $\beta : \DD^Q \to B^Q$ such that for any $\rho \in  \DD^Q$, and any $q\in Q$,
{$$\beta(\rho)_q=\begin{cases}
\bot &\text{if $q$ is in both standard and diagonal basis in $\rho$}\\
{\bf s} &\text{if $q$ is in the standard and not in the diagonal basis in $\rho$}\\
{\bf d} &\text{if $q$ is in the diagonal and not in the standard basis in $\rho$}\\
\top& \text{otherwise}
\end{cases}$$}
\end{definition}

\section{A Quantum Programming Language}

Several quantum programming languages have been introduced
recently. For a complete overview see \cite{Gay06}. We use an imperative
quantum programming language introduced in \cite{Per07d}, the syntax is
similar to the language introduced by Abramsky \cite{Abramsky-SQPL}. For the
sake of simplicity and in
order to focus on entanglement analysis, the memory is supposed to be fixed and finite. Moreover, the memory is
supposed to be composed of qubits only, whereas hybrid memories
composed of classical and quantum parts are often
considered. However, contrary to the quantum circuit or quantum Turing
machine frameworks, the absence of classical memory does not avoid the classical control of
the quantum computation since classically-controlled conditional structures are allowed (see
section \ref{sec:concrete}.)
 
\begin{definition}[Syntax]
For a given finite set of symbols $q\in Q$, a program is a pair 
$\langle C,Q\rangle$ where $C$ is a command defined as follows:

$$\begin{array}{rccl}
C&::=&& \sk\\
&&|& C_{1}; C_{2}\\
&&|& \ite{q}{C_{1}}{C_{2}}\\
&&|& \while{q}{C}\\
&&|& \h (q)\\
&&|& \phase (q)\\
&&|& \cnot(q,q)
\end{array}$$
\end{definition}

\begin{example}\label{ex:telep} Quantum entanglement between two qubits $q_{2}$ and
$q_{3}$ can be created for instance by applying $H$ and $CNot$ on an
appropriate state. Such an entangled state can then be used to
\emph{teleporte} the state of a third qubit $q_{1}$. The protocol of
teleportation \cite{BBC+93a} can be described as $\langle
\mathsf{teleportation}, \{q_1,q_2,q_3\}\rangle$, where

$$\begin{array}{rl}
\mathsf{teleportation:}&\h(q_{2}) ;\\
& \cnot(q_{2},q_{3}) ;\\
&\cnot(q_{1},q_{2}) ;\\
& \h (q_1) ;\\
&\mathsf{if} {~q_{1}~} \mathsf{then}\\
&~~~\ite{q_{2}}{\sk}{\sigma_{x}(q_{3})}\\
&\mathsf{else}\\
&~~~\ite{q_{2}}{\sigma_{z}(q_{3})}{\sigma_{y}(q_{3})}\\
\end{array}$$

The semantics of this program is given in example \ref{ex:csem}.
\end{example}

\subsection{Concrete Semantics}
\label{sec:concrete}

Several domains for quantum computation have been introduced
\cite{K03,Abramsky-SQPL,Phd06}. Among them, the domain of
superoperators over density matrices, introduced by Selinger
\cite{Sel03} turns out to be one of the most adapted to quantum semantics. Thus, we introduce a denotational semantics following the work of Selinger.

For a finite set of variables $Q=\{q_{0}, \ldots , q_{n}\}$, let $\DD^Q= D_{2^{|Q|}}$. $Q$ is a set of qubits, the state of $Q$ is a density operator in $\DD^Q$. 
\begin{definition}[L\"owner partial order]
For matrices $M$ and $N$ in $\mathbb C ^{n\times n}$, $M\sqsubseteq N$
if $N-M$ is positive-semidefinite.

\end{definition}

In \cite{Sel03}, Selinger proved that the poset $(\DD^Q,\sqsubseteq)$ is a complete partial order with $0$ as its
least element. Moreover the poset of superoperators over $\DD^Q$ is a
complete partial order as well, with $0$ as least element and where
the partial order $\sqsubseteq'$ is defined as $F \sqsubseteq' G \iff
\forall k \ge 0, \forall \rho \in \DD_{k2^{|Q|}}, (\id_k \otimes F)(\rho)
\sqsubseteq (\id_k \otimes G)(\rho)$, where $\id_k:\DD_k \to \DD_k$
is the identity map. Notice that these complete partial orders are not
lattices (see \cite{Sel03}.)

We are now ready to introduce the concrete denotational semantics
which associates with any program $\prog C Q$, a superoperator $\den C :  \DD^Q \to \DD^Q$.

\begin{definition}[Denotational semantics]\label{def:concretesem}
$$\begin{array}{rcl}
\den{\sk}&=&\id\\
&&\\
\den{C_{1};C_{2}} &=& \den{C_{2}}\circ\den{C_{1}}\\
&&\\
\den{\U (q)} &=& \lambda \rho.U_{q} \rho U_{q}^\dagger\\
&&\\
\den{\cnot (q_{1},q_{2})} &=& \lambda \rho .CNot_{q_{1},q_{2}} \rho CNot_{q_{1},q_{2}}^\dagger\\
&&\\
\den{\ite q {C_{1}}{C_{2}}} & =&\lambda \rho. \left(\den{C_{1}} (\proj^{\true}_{q}\rho \proj^{\true}_{q}) + \den{C_{2}}(\proj^{\false}_{q}\rho \proj^{\false}_{q})\right)\\
&&\\
\den{\while q C}&=& \mathsf{lfp}\left(\lambda f. \lambda \rho. \left(f\circ \den C (\proj^{\true}_{q}\rho \proj^{\true}_{q}) + \proj^{\false}_{q}\rho \proj^{\false}_{q}\right)\right)\\
&=&\sum_{n\in \mathbb N}\left(F_{\proj^{\false}}\circ (\den C \circ F_{\proj^\true})^n\right)\\

\end{array}$$

where $\proj^{\true} = \left(\begin{array}{cc} 1&0\\ 0&0\end{array}\right)$ and $\proj^{\false} = \left(\begin{array}{cc} 0&0\\ 0&1\end{array}\right)$, $F_{M} = \lambda \rho. M\rho M^\dagger$, and $M_{q}$ means that $M$ is applied on qubit $q$. We refer 
the reader to an extended version of this paper for the technical explanations on 
continuity and convergence.
\end{definition}

In the absence of classical memory, the classical
control is encoded into the conditional structure $\ite q {C_1} {C_2}$
such that the qubit $q$ is first measured according to the computational
basis. If the first projector is applied, then the classical outcome
is interpreted as \emph{true} and the command $C_1$ is
applied. Otherwise, the second projector is applied, and the command $C_2$ is performed. The classical control
appears in the loop $\while q C$ as well.

As a consequence of the classical control, non unitary transformations can be implemented: 

$$\begin{array}{rcccl}
\den{\ite q q  {\sigma_x(q)}}&:& \DD^{\{q\}} \to \DD^{\{q\}} &=&
\lambda \rho . \proj^\true\\
&&\\
\den{\while q {\h(q)}}&:& \DD^{\{q\}} \to \DD^{\{q\}} &=& \lambda \rho . \proj^\false\\
\end{array}$$

Notice that the matrices $\proj^\true$ and $\proj^\false$, used in definition
\ref{def:concretesem} for describing the computational measurement
$\{\proj^\true,\proj^\false\}$ can also be used as density matrices for 
describing a quantum state as above.

Moreover, notice that all the ingredients for approximating any
superoperators can be encoded into the language: the ability to
initialise any qubit in a given state (for instance $\proj^\true$ or
$\proj^\false$); an approximative universal family of unitary
transformation $\{H, T, CNot, \sigma_x,\sigma_y, \sigma_z\}$; and the
computational measurement of a qubit $q$ with $\ite q \sk \sk$.

\begin{example}\label{ex:csem} The program $\prog {\mathsf {teleportation}} {\{q_1,q_2,q_3\}}$ described in example \ref{ex:telep} realises the
teleportation from $q_1$ to $q_3$, when the qubits $q_2$
and $q_3$ are  both initialised in state $\proj^\true$: for any $\rho \in \DD_2$, $$\den {\mathsf{teleportation}}(\rho \otimes \proj^\true \otimes \proj^\true)
=\left(\frac 14\sum_{k,l\in \{\true, \false\}}{\proj^k \otimes \proj^l}\right)\otimes \rho$$

\end{example}

\section{Entanglement Analysis}

What is the role of the entanglement in quantum information theory?
How does the entanglement evolve during a quantum computation? 
We consider the problem of analysing the
entanglement evolution on a classical computer, since no large scale
quantum computer is available at the moment. Entanglement analysis
using a quantum computer is left to further
investigations\footnote{Notice that this is not clear that the use of
a quantum computer avoids the use of the classical computer since there
is no way to measure the entanglement of a quantum state without
transforming the state.}.

In the absence of quantum computer, an obvious solution consists in
simulating the quantum computation on a classical computer.  
Unfortunately, the classical memory required for the
simulation is exponentially large in the size of the quantum memory of the program
simulated. Moreover, the problem \textbf{SEP} of deciding whether a given quantum
state $\rho$ is biseparable or not is NP Hard\footnote{For pure  quantum states
(i.e. $\trace (\rho^2)=\trace(\rho)$), a linear algorithm have been introduced
\cite{MJ03} to solve the sub-problem of finding biseparability of the
form $(\{q_0,\ldots, q_k\},\{q_{k+1},\ldots,
q_n\})$ -- thus sensitive to the ordering of the qubits in the
register. Notice that this algorithm is linear in the size
of the input which is a density matrix, thus the algorithm is
exponential in the number of qubits.}
\cite{Gur03b}. Furthermore, the input of the problem \textbf{SEP} is a
density matrix, which size is exponential in the number of qubits.   
As a consequence, the solution of a classical simulation is
not suitable for an efficient entanglement analysis.

To tackle this problem, a solution consists in reducing the size of the
quantum state space by considering a subspace of possible states, such
that there exist algorithms to decide whether a state of the subspace is entangled or not in a polynomial time in
the number of qubits. This solution has been developed in
\cite{GRP07}, by considering stabiliser states only. However, this
solution, which may be suitable for some quantum protocols, is questionable
for analysing quantum algorithms since all the quantum
programs on which such an entanglement analysis can be driven
are also efficiently simulable on a classical computer. 

In this paper, we introduce a novel approach which consists in
approximating the entanglement evolution of the quantum memory. This solution is based on the framework of abstract interpretation
introduced by Cousot and Cousot \cite{CC77}. Since a classical domain for driving a sound
 and complete analysis of entanglement is exponentially large in the
number $n$ of qubits, we consider an abstract domain of size
$n$ and we introduce an abstract semantics which leads to a sound
approximation of the entanglement evolution during the computation.

\subsection{Abstract semantics}

The entanglement of a quantum state can be represented as a partition
of the qubits of the state (see section \ref{sec:entanglement}), thus a natural abstract domain is a domain
composed of partitions. Moreover, for a given state $\rho$, one can add a flag for each qubit $q$, indicating whether the state of this qubit is in the standard basis $\bf s$ or in the diagonal basis ${\bf d}$ (see section \ref{sec:basis}). 

\begin{definition}[Abstract Domain] 
For a finite set of variables $Q$, 
 let $\mathcal A^Q = B^Q \times \RR^Q$ be an abstract domain, where  
 $B^Q = Q\to \{{\bf s}, {\bf d}, \top, \bot\}$ and 
$\RR^Q$ is the set of partitions of $Q$: 
 $$\RR^Q =\{\pi \subseteq \wp(Q)\setminus \{\emptyset\}~|~\bigcup_{X\in \pi} X = Q ~and~ (\forall X,Y \in \pi, ~X\cap Y
=\emptyset ~or~ X=Y)\}$$
\end{definition}

 The abstract domain $\mathcal A$ is ordered as follows. First, let $(\{{\bf s}, {\bf d}, \top, \bot\},\le)$ be a poset, where $\le$ is defined as: $\bot \le {\bf s} \le \top$ and $\bot \le {\bf d}\le \top$. $(B^Q,\le)$ is a poset, where $\le$ is defined pointwise. Moreover, for any 
  $\pi_1,\pi_2 \in \RR^Q$, 
 let $\pi_1 \le \pi_2$ if $\pi_1$ rafines $\pi_2$, i.e. for every block
$X\in \pi_1$ there exists a block $Y\in \pi_2$ such that $X\subseteq
Y$. Finally, for any $(b_1,\pi),(b_2,\pi_2)\in \mathcal A^Q$, $(b_1,\pi)\poa (b_2,\pi_2)$ if $b_1\le b_2$ and $\pi_1\le\pi_2$.
 
 \begin{proposition} For any finite set $Q$, $(\mathcal A^Q , \poa)$ is a complete partial order, with
$\bot= (\lambda q. \bot,\{ \{q\}, q\in Q\})$ as least element. 
\end{proposition}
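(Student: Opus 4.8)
The plan is to exploit finiteness. Since $Q$ is finite, the function space $B^Q = Q \to \{{\bf s},{\bf d},\top,\bot\}$ is finite and the set $\RR^Q$ of partitions of $Q$ is finite, so $\mathcal A^Q = B^Q \times \RR^Q$ is a \emph{finite} poset under $\poa$. First I would confirm that $\poa$ is genuinely a partial order. Reflexivity and transitivity are inherited componentwise from the two factor orders, which in turn inherit them from the four-element base order and from refinement. Antisymmetry of the $B^Q$-component reduces, pointwise, to antisymmetry of $\le$ on $\{{\bf s},{\bf d},\top,\bot\}$, which is immediate from the covering relations $\bot \le {\bf s},{\bf d} \le \top$; antisymmetry of the $\RR^Q$-component is the standard fact that two partitions refining each other must coincide.

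Next I would reduce the CPO property to a general observation about finite posets: every finite poset possessing a least element is a complete partial order. Concretely, I would show that every nonempty directed subset $D \subseteq \mathcal A^Q$ has a greatest element, which is then its supremum. Since $\mathcal A^Q$ is finite, $D$ is finite; enumerating $D = \{d_1,\dots,d_n\}$ and using directedness repeatedly, so that each pair $\{m_i, d_{i+1}\}$ has an upper bound $m_{i+1} \in D$, produces after finitely many steps an element $m_n \in D$ above every element of $D$, hence $\sup D = m_n \in D$. The empty directed set has the least element of $\mathcal A^Q$ as its supremum, so it only remains to exhibit that least element.

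Finally I would identify the bottom element. In $B^Q$ the constant map $\lambda q.\bot$ lies below every $b \in B^Q$ pointwise, because $\bot$ is least among $\{{\bf s},{\bf d},\top,\bot\}$; in $\RR^Q$ the finest partition $\{\{q\}, q\in Q\}$ refines every partition, since each of its singleton blocks is contained in some block of any given partition, so it is least in the refinement order. As the order on $\mathcal A^Q$ is componentwise, $(\lambda q.\bot, \{\{q\}, q\in Q\})$ is the least element, exactly as claimed. The main obstacle here is not any single computation but rather the directed-completeness clause, since everything hinges on knowing that suprema of directed sets exist. Finiteness dissolves this obstacle via the greatest-element argument, so the residual work is purely the routine order-theoretic checks above, of which the refinement antisymmetry for $\RR^Q$ is the only one that is not entirely mechanical. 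Were one to avoid finiteness, the real difficulty would migrate to proving directed-completeness of the partition lattice and that the two projections preserve directedness; but for finite $Q$ this is unnecessary.
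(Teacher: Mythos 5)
Your proposal is correct and rests on exactly the same idea as the paper's one-line proof: since $Q$ is finite, $\mathcal A^Q$ is finite, so every chain (equivalently, every directed set) has a supremum, and the indicated pair is clearly the least element. You have merely filled in the routine order-theoretic details (antisymmetry, the greatest-element argument for finite directed sets, and the bottom-element check) that the paper leaves implicit.
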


\begin{proof}
Every chain has a supremum since $Q$ is finite.\qed
\end{proof}

Basic operations of meet and join are defined on $\mathcal A^Q$. It turns out that contrary to $\DD^Q$, $\langle \mathcal A^Q,\joina, \meeta,
\bot, (\lambda q.\top,\{Q\})\rangle $ is a lattice. 

A removal operation on partitions is introduced as follows: for a given partition $\pi=\{Q_i, i\in I\}$, let $\pi\setminus q = \{Q_i\setminus \{q\}, i \in I\} \cup \{\{q\}\}$.  
Moreover, 
for any pair of qubits
$q_1,q_2\in Q$, let $[q_1,q_2] = \{\{q ~|~ q\in Q\setminus \{q_1,q_2\}\}, \{q_1,q_2\}\}$. 

Finally, for any $b\in B^Q$, any $q_0,q \in Q$, any $k\in \{{\bf s}, {\bf d}, \top, \bot\}$, let $$b^{q_0\mapsto k}_{q} = \begin{cases} k &\text{if $q=q_0$}\\ b_q&\text{otherwise}\end{cases}$$

We are now ready to define the abstract semantics of the language:

\begin{definition}[Denotational abstract semantics]
For any program $\prog C Q$, let $\dena C :  \AAA^Q  \to \AAA^Q  $ be defined as follows:
For any $(b,\pi)\in \AAA^Q$, 
$$\begin{array}{rcl}
\dena{\sk}(b,\pi)&=&(b,\pi)\\
&&\\
\dena{C_{1};C_{2}}(b,\pi)&=& \dena{C_{2}}\circ\dena{C_{1}} (b,\pi)\\
&&\\
\dena{\sigma (q)} (b,\pi)&=&(b,\pi)\\
&&\\
\dena{\h (q)} (b,\pi)&=&(b^{q\mapsto {\bf d}},\pi) \text{ if $b_q={\bf s}$}\\
&=&(b^{q\mapsto {\bf s}},\pi) \text{ if $b_q={\bf d}$}\\
&=&(b,\pi) \text{ otherwise}\\
&&\\
\dena{\phase (q)} (b,\pi)&=&(b^{q\mapsto {\bf \top}},\pi) \text{ if $b_q={\bf d}$}\\
&=&(b^{q\mapsto {\bf s}},\pi) \text{ if $b_q={\bot}$}\\
&=&(b,\pi) \text{ otherwise}\\
&&\\

\dena{\cnot (q_{1},q_{2})}(b,\pi) &=& (b,\pi) ~~ \text{ if $b_{q_1}={\bf s}$ or $b_{q_2}={\bf d}$}\\
  &=& (b^{q_1 \mapsto {\bf s}},\pi) ~~ \text{ if $b_{q_1}=\bot$ and $b_{q_2} > {\bot}$}\\
    &=& (b^{q_2 \mapsto {\bf d}},\pi) ~~ \text{ if $b_{q_1}>\bot$ and $b_{q_2} = {\bot}$}\\
     &=& (b^{q_1 \mapsto {\bf s}, q_2 \mapsto {\bf d}},\pi) ~~ \text{ if $b_{q_1}=\bot$ and $b_{q_2}= {\bot}$}\\
    &=& (b^{q_1,q_2\mapsto {\bf \top}},\pi\joina [q_{1},q_{2}]) ~~ \text{ otherwise}\\
&&\\
\dena{\ite q {C_{1}}{C_{2}}}(b,\pi) & =& \left(\dena{C_{1}}(b^{q\mapsto {\bf s}},\pi\setminus q) 
\joina \dena{C_{2}}(b^{q\mapsto {\bf s}},\pi\setminus q)\right)\\
&&\\
\dena{\while q C}(b,\pi)&=& \mathsf{lfp}\left(\lambda f.\lambda \pi.   \left(f\circ \dena{C}(b^{q\mapsto {\bf s}},\pi\setminus q)
\joina (b^{q\mapsto {\bf s}},\pi\setminus q)\right)\right)\\

&= & {\Joina_{n\in \mathbb N}} \left (\Fa_{q}\circ ({\dena C \circ \Fa_{q}})^n\right)\\
\end{array}$$

where $\Fa_{q} = \lambda (b,\pi). (b^{q\mapsto {\bf s}},\pi\setminus q)$.
\end{definition}

Intuitively, quantum operations act on entanglement as follows: 
\begin{itemize}
\item A
$1$-qubit measurement makes the measured qubit separable from the rest
of the memory. Moreover, the state of the measured qubit is in the standard basis.
\item A $1$-qubit unitary transformation does not modify
entanglement. Any Pauli operator $\sigma\in \{\sigma_x,\sigma_y,\sigma_z\}$ preserves the standard and the diagonal basis of the qubits. Hadamard $H$ transforms a state of the standard basis into a state of the diagonal basis and vice-versa. Finally the phase $T$ preserves the standard basis but not the diagonal basis.
\item The $2$-qubit unitary transformation $CNot$, applied on
$q_1$ and $q_2$ may create entanglement between the qubits or not. It turns out that if $q_1$ is in the standard basis, or $q_2$ is in the diagonal basis, then no entanglement is created and the basis of $q_1$ and $q_2$ are preserved. Otherwise, since a sound approximation is desired, $CNot$ is abstracted into an operation which creates entanglement.
\end{itemize}

\begin{remark}
Notice that the space needed to store a partition of $n$ elements is
$O(n)$. 
Moreover, meet, join and removal and can be done in either constant or linear
time.  
\end{remark}

\begin{example} 
\label{ex:asem}
The abstract semantics of the teleportation (see example \ref{ex:telep}) is
$\dena {\mathsf{teleportation}}: \AAA^{\{q_1,q_2,q_3\}} \to
\AAA^{\{q_1,q_2,q_3\}} = \lambda (b,\pi). (b^{q_1,q_2\mapsto {\bf s}, q_3\mapsto \top},\bot)$. Thus, for any 3-qubit
state, the state of the memory after the teleportation is fully
separable.

Assume that a fourth qubit $q_4$ is entangled with $q_1$ before
the teleportation, whereas  $q_2$ and $q_3$ are in the state
$\proj^\true$. So that, the state of the memory before the teleportation is $[q_1,q_4]$-separable. 
 The abstract semantics of $\prog {\mathsf{teleportation}} {{\{q_1,q_2,q_3,q_4\}}}$ is such
that 
$$ \dena {\mathsf{teleportation}}(b,[q_1,q_4]) =(b^{q_1,q_2\mapsto {\bf s}, q_3\mapsto \top},[q_3,q_4])$$

Thus the abstract semantics predicts that $q_3$ is entangled with
$q_4$ at the end of the teleportation, even if $q_3$ never interacts
with $q_4$.
\end{example}

\begin{example}\label{ex:trap}
Consider the program $\prog {\mathsf {trap}} {\{q_1,q_2\}}$, where $$\mathsf {trap} = \cnot(q_1,q_2);\cnot(q_1,q_2)$$

Since $CNot$ is self-inverse, $\den{\mathsf{trap}}: \DD^{\{q_1,q_2\}}\to \DD^{\{q_1,q_2\}}=\lambda \rho.\rho$. For instance, $\den{\mathsf{trap}}(\frac12(P^\true+P^\false)\otimes P^\true) =\frac12(P^\true+P^\false)\otimes P^\true$.

However, if $b_{q_1}={\bf d}$ and $b_{q_2}={\bf s}$ then $$\dena{\mathsf{trap}}(b, \{\{q_1\},\{q_2\}\}) = (b^{q_1\mapsto\top, q_1\mapsto \top}, \{\{q_1,q_2\}\})$$

Thus, according to the abstract semantics, at the end of the computation, $q_1$ and $q_2$ are entangled.

\end{example}

\subsection{Soundness}

Example \ref{ex:trap} points out that the abstract semantics is an
approximation, so it may differ from the entanglement
evolution of the concrete semantics. However, in this section, we prove the soundness of the
abstract interpretation (theorem \ref{def:sound}). 

First, we define a function $\beta:\DD^Q \to B^Q$ such that $\beta(\rho)$ describes which qubits of $\rho$ are in the standard or diagonal basis:

\begin{definition}
For any finite $Q$, let $\beta : \DD^Q \to B^Q$ such that for any $\rho \in  \DD^Q$, and any $q\in Q$,
{
$$\beta(\rho)_q=\begin{cases}
{\bf s} &\text{if {\small $P^\true_q \rho P^\false_q= P^\false_q \rho P^\true_q = 0$}}\\
{\bf d} &\text{if {\small $(P^\true_q + P^\false_q) \rho (P^\true_q - P^\false_q)= (P^\true_q - P^\false_q) \rho (P^\true_q + P^\false_q) = 0$}}\\
\top& \text{otherwise}
\end{cases}$$}
\end{definition}

A natural soundness relation is then:

\begin{definition}[Soundness relation]
For any finite set $Q$, let $\sigma\in \wp(\DD^Q,\AAA^Q)$ be the
soundness relation:
$$\sigma=\{ (\rho,(b,\pi)) ~|~ \text{$\rho$ is $\pi$-separable and $\beta(\rho)\le b$}\}$$
\end{definition}

The approximation relation is nothing but the partial order $\poa$: $(b,\pi)$ is a more
precise approximation than $(b',\pi')$ if $(b,\pi)\poa (b',\pi')$. Notice that the abstract soundness assumption is satisfied: if $\rho$ is $\pi$-separable
and $\pi \poa \pi'$ then $\rho$ is $\pi'$-separable. So, $(\rho,a)\in \sigma$ and $(\rho,a)\le (\rho',a')$ imply $(\rho',a')\in \sigma$.

However, the best approximation is not ensured. Indeed, there exist
 some $3$-qubit states \cite{EW01,BDM+99a} which are separable according to
any of the $3$ bipartitions of their qubits $\{a,b,c\}$ but which are
not $\{\{a\},\{b\},\{c\}\}$-separable. Thus, the best approximation
does not exist.

However, the soundness relation $\sigma$ satisfies the following lemma:

\begin{lemma}
For any finite set $Q$, any $\rho_1,\rho_2 \in \DD^Q$, and any $a_1,a_2 \in \AAA^Q$,
$$(\rho_1,a_1), (\rho_2,a_2)\in \sigma \implies (\rho_1+\rho_2,\pi_1\joina \pi_2)\in \sigma$$
\end{lemma}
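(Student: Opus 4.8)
The plan is to unfold the two conjuncts packaged in the membership $(\rho,(b,\pi))\in\sigma$ and handle them independently, reading the conclusion $(\rho_1+\rho_2,\pi_1\joina\pi_2)$ as shorthand for the full abstract element $a_1\joina a_2=(b_1\joina b_2,\pi_1\joina\pi_2)$, where $a_i=(b_i,\pi_i)$. Thus the hypotheses give, for $i\in\{1,2\}$, that $\rho_i$ is $\pi_i$-separable and $\beta(\rho_i)\poa b_i$, and I must prove (i) that $\rho_1+\rho_2$ is $(\pi_1\joina\pi_2)$-separable, and (ii) that $\beta(\rho_1+\rho_2)\poa b_1\joina b_2$.

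For (i), I would first coarsen. Since $\pi_i\poa\pi_1\joina\pi_2$ and separability is preserved upward along $\poa$ (the closure property noted just before the lemma), both $\rho_1$ and $\rho_2$ are already $(\pi_1\joina\pi_2)$-separable. Writing $\pi_1\joina\pi_2=\{Q_j\mid j\in J\}$, each $\rho_i$ therefore admits a decomposition $\sum_{k}p_k\bigotimes_{j\in J}\rho^{Q_j}_k$ with $p_k\ge 0$. Taking the disjoint union of the two index sets exhibits $\rho_1+\rho_2$ in precisely this form, again with non-negative weights, so $\rho_1+\rho_2$ is $(\pi_1\joina\pi_2)$-separable. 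This step is routine once the coarsening has reduced both operands to a common partition.

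For (ii), the pivotal observation is that both predicates defining $\beta$ are \emph{linear and homogeneous} in $\rho$: the standard-basis predicate $P^\true_q\rho P^\false_q=P^\false_q\rho P^\true_q=0$ and, after conjugating by $H_q$, the diagonal-basis predicate are each the vanishing of a fixed linear map applied to $\rho$, hence cut out linear subspaces closed under addition. I would reduce the goal to $\beta(\rho_1+\rho_2)\poa\beta(\rho_1)\joina\beta(\rho_2)$, since monotonicity of $\joina$ together with $\beta(\rho_i)\poa b_i$ then yields $\beta(\rho_1)\joina\beta(\rho_2)\poa b_1\joina b_2$ and the claim follows by transitivity. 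Arguing pointwise at each qubit $q$, I would identify the value $\beta(\rho)_q$ with the set of predicates it satisfies; under this reading the join on $\{{\bf s},{\bf d},\top,\bot\}$ corresponds exactly to \emph{intersecting} the two predicate-sets, so the inequality reduces to the statement that any predicate satisfied by both $\rho_1$ and $\rho_2$ is satisfied by $\rho_1+\rho_2$ — which is precisely closure under addition.

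The step I expect to be the main obstacle is the possibility, in (ii), that $\rho_1+\rho_2$ satisfies \emph{more} predicates than either summand: two qubits both flagged ${\bf d}$ whose standard off-diagonal blocks cancel in the sum make $\rho_1+\rho_2$ standard as well (a genuinely realizable situation for positive-semidefinite operators). This is not a counterexample but exactly the reason the order on $\{{\bf s},{\bf d},\top,\bot\}$ must place $\bot$ \emph{below} both ${\bf s}$ and ${\bf d}$: acquiring an extra predicate moves $\beta(\rho_1+\rho_2)_q$ \emph{downward} in $\poa$, so the bound $\beta(\rho_1+\rho_2)_q\poa\beta(\rho_1)_q\joina\beta(\rho_2)_q$ is preserved rather than violated. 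Making this rigorous requires reading the case-split for $\beta$ with the $\bot$ convention of the earlier definition (both bases present) rather than letting the ${\bf s}$-clause shadow a simultaneously satisfied ${\bf d}$-clause; I would state this convention explicitly before running the pointwise verification.
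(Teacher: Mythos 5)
Your proposal is correct, but there is nothing in the paper to compare it against: the paper states this lemma bare, with no proof at all (it sits between the definition of the soundness relation and the monotonicity lemma), so your argument genuinely fills a gap rather than paralleling an existing one. Both halves of your proof are sound: for separability, coarsening each $\rho_i$ to the common partition $\pi_1\joina\pi_2$ via the upward-closure property the paper records just before the lemma, and then concatenating the two convex decompositions, is exactly what is needed; for the basis flags, the reduction to $\beta(\rho_1+\rho_2)\poa\beta(\rho_1)\joina\beta(\rho_2)$ via linearity of the defining predicates, with the identification of $\joina$ as intersection of satisfied predicate sets, is clean and complete. Your closing observation is not a side remark but the crux: the paper gives \emph{two} inconsistent definitions of $\beta$, and under the literal, case-ordered reading of the second one (where the ${\bf s}$-clause shadows a simultaneously satisfied ${\bf d}$-clause, so $\beta$ never outputs $\bot$) the lemma is actually \emph{false} --- take $\rho_1=\frac12 H\proj^\true H$ and $\rho_2=\frac12 H\proj^\false H$, each with flag ${\bf d}$, whose sum $\frac12\mathbb{I}$ would get flag ${\bf s}\not\poa{\bf d}$. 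Only the first definition (both bases $\mapsto\bot$) makes the statement true, and you are right to insist on fixing that convention before the pointwise verification. You also correctly repair the statement itself, whose conclusion as printed ($\pi_1\joina\pi_2$, with the $b$-component silently dropped and $\pi_i$ never introduced) only makes sense read as $a_1\joina a_2=(b_1\joina b_2,\pi_1\joina\pi_2)$; that reading is the one the soundness theorem's induction (the $\mathsf{if}$ and $\mathsf{while}$ cases) actually requires. One pedantic point you could add in passing: $\rho_1+\rho_2$ may have trace exceeding $1$ and so strictly speaking may leave $\DD^Q$, but this is a defect of the paper's formulation, not of your proof, since neither separability nor the basis predicates are sensitive to normalisation.
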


Moreover, the abstract semantics is monotonic according to the approximation relation:

\begin{lemma}
For any command $C$, $\dena C$ is $\poa$-monotonic: for any $\pi_{1}, \pi_{2} \in \AAA^Q$, $$\pi_{1}\poa \pi_{2}\implies \dena C(\pi_{1})\poa \dena C (\pi_{2})$$
\end{lemma}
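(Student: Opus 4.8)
The plan is to proceed by structural induction on the command $C$, showing at each step that $\dena C:\AAA^Q\to\AAA^Q$ preserves $\poa$. I write an abstract element as a pair $(b,\pi)$ and exploit the fact that $\poa$ decomposes as the pointwise order on the flag component $b\in B^Q$ together with the refinement order on the partition $\pi\in\RR^Q$, so it suffices to control the two components separately. Before touching the commands I would record three routine auxiliary facts: (i) for fixed $q$, the flag update $b\mapsto b^{q\mapsto{\bf s}}$ is $\le$-monotone, being the identity away from $q$ and constant at $q$; (ii) the removal $\pi\mapsto\pi\setminus q$ is $\le$-monotone on $\RR^Q$, since a block $X\setminus\{q\}$ of $\pi_1\setminus q$ lies inside $Y\setminus\{q\}$ for the $\pi_2$-block $Y\supseteq X$, while the added singleton $\{q\}$ lies inside the $\{q\}$ block of $\pi_2\setminus q$; hence (iii) $\Fa_q=\lambda(b,\pi).(b^{q\mapsto{\bf s}},\pi\setminus q)$ is monotone. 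I would also note that, since $\AAA^Q$ is a lattice, $\joina$ is monotone in each argument.

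For the base cases, $\sk$ and the Pauli gates $\sigma(q)$ are the identity and hence monotone. For $\h(q)$ and $\phase(q)$ only the flag at $q$ changes, so monotonicity reduces to checking that the induced self-map of $(\{{\bf s},{\bf d},\top,\bot\},\le)$ is order-preserving: for $\h$ this map swaps the incomparable pair ${\bf s},{\bf d}$ and fixes $\bot,\top$, an order-automorphism; for $\phase$ one checks the covering relations $\bot\le{\bf s}$, $\bot\le{\bf d}$, ${\bf s}\le\top$, ${\bf d}\le\top$ directly against the assignment $\bot\mapsto{\bf s}$, ${\bf d}\mapsto\top$, ${\bf s}\mapsto{\bf s}$, $\top\mapsto\top$.

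The inductive cases $C_1;C_2$, $\ite q{C_1}{C_2}$ and $\while q C$ are then assembled from the induction hypothesis and the auxiliary facts. Sequencing is a composition of monotone maps. The conditional is, as a function of $(b,\pi)$, the join $(\dena{C_1}\circ\Fa_q)\joina(\dena{C_2}\circ\Fa_q)$ of two compositions of monotone maps, hence monotone. For the loop, each iterate $\Fa_q\circ(\dena C\circ\Fa_q)^n$ is monotone by the hypothesis on $\dena C$ together with monotonicity of $\Fa_q$, and the pointwise supremum $\Joina_{n\in\mathbb N}(\cdots)$ of a family of monotone maps is again monotone (the supremum exists because $Q$, and hence $\AAA^Q$, is finite).

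The main obstacle is the $\cnot(q_1,q_2)$ case, whose definition splits into five priority-ordered branches. The partition component changes only in the last, entangling branch, where $\pi$ is replaced by $\pi\joina[q_1,q_2]$; here the key observation is that the trigger condition for this branch is exactly $b_{q_1}\in\{{\bf d},\top\}$ and $b_{q_2}\in\{{\bf s},\top\}$, which is upward closed in the flag order, so for a comparable pair $(b,\pi)\poa(b',\pi')$ entanglement is created for the smaller input only if it is created for the larger, and monotonicity of the partition component then follows from monotonicity of $\joina$. The delicate part is the flag component: one must verify, for every $\poa$-comparable pair of inputs, that the output flags at $q_1,q_2$ stay ordered as the inputs cross the boundaries between the flag-raising branches and the preserving branch. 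I would organise this as a finite check over the comparabilities of $(b_{q_1},b_{q_2})$, and I expect the subtle points to be precisely the transitions where one input falls in the preserving branch ($b_{q_1}={\bf s}$ or $b_{q_2}={\bf d}$) while the comparable one falls in the branch that raises both flags from $\bot$; these boundary comparisons are the crux, and they are exactly where any failure of monotonicity would surface.
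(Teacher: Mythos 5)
Your architecture --- componentwise structural induction, with the auxiliary monotonicity facts for the flag update, the removal $\pi\setminus q$, the map $\Fa_q$, and the join --- is sound, and your treatment of $\sk$, the one-qubit gates, sequencing, the conditional, the loop, and the \emph{partition} component of $\cnot$ (via upward-closure of the entangling branch's trigger) is correct. But the proposal stops exactly where the proof must do its real work: you never carry out the ``finite check'' on the flag component of $\cnot(q_1,q_2)$; you only predict where its subtle points lie. This is a genuine gap, and not a benign one, because the check you deferred actually \emph{fails}. Take $Q=\{q_1,q_2\}$, $\pi=\{\{q_1\},\{q_2\}\}$, and compare $b$ with $b_{q_1}=b_{q_2}=\bot$ against $b'$ with $b'_{q_1}=\bot$, $b'_{q_2}={\bf s}$, so that $(b,\pi)\poa (b',\pi)$. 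The input $b$ falls in the branch ``$b_{q_1}=\bot$ and $b_{q_2}=\bot$'', producing output flags $({\bf s},{\bf d})$ at $(q_1,q_2)$, while $b'$ falls in the branch ``$b_{q_1}=\bot$ and $b_{q_2}>\bot$'', producing $({\bf s},{\bf s})$. Since ${\bf d}\not\le{\bf s}$, the outputs are incomparable and monotonicity is violated. The same failure occurs against the preserving branch: $(\bot,\bot)\le({\bf s},{\bf s})$ yields $({\bf s},{\bf d})\not\poa({\bf s},{\bf s})$, and $(\bot,\bot)\le({\bf d},{\bf d})$ yields $({\bf s},{\bf d})\not\poa({\bf d},{\bf d})$.

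The boundary comparisons you correctly isolated as ``the crux'' therefore do not merely require care; they refute the lemma for the abstract semantics as literally written, so no proof ``by induction on $C$'' (which is all the paper itself offers) can close this case. The culprit is solely the fourth branch $(\bot,\bot)\mapsto({\bf s},{\bf d})$: it raises the two flags to values that are incomparable with the outputs produced by every input lying just above $(\bot,\bot)$. The statement becomes true --- and your finite check then succeeds, since all other pairs of branches are compatible --- if that branch is repaired to leave $(b,\pi)$ unchanged. The repair is still sound: $\beta(\rho)_{q_1}=\beta(\rho)_{q_2}=\bot$ forces $\rho$ to commute with the Pauli operators $\sigma_z$ and $\sigma_x$ on both qubits, hence $\rho=\frac14\,\id_{q_1q_2}\otimes\trace_{q_1q_2}(\rho)$, which is invariant under $CNot$. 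In short, your plan is the right one and your instinct about where failure would surface is exactly right, but a complete answer must actually perform that check, and doing so honestly requires either exhibiting the counterexample above or amending the $\cnot$ abstract semantics before the lemma can be proved.
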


\begin{proof}
The proof is by induction on  $C$.
\end{proof}

\begin{theorem}[Soundness]\label{def:sound}
For any program  $\prog C Q$, any $\rho \in \DD^Q$, and any $a\in \AAA^Q$, 
$$(\rho,a)\in \sigma \implies (\den C (\rho), \dena C (a))\in \sigma$$

\end{theorem}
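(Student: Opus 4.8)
The plan is to prove the statement by structural induction on the command $C$, checking for each construct the two conjuncts of $\sigma$ separately: that $\den C(\rho)$ is $\pi'$-separable and that $\beta(\den C(\rho))\poa b'$, where $(b',\pi')=\dena C(b,\pi)$. Two tools established earlier are used freely: the monotonicity lemma, to propagate soundness estimates through the compound cases, and the upward closure of $\sigma$ noted after the soundness relation (if $(\rho,a)\in\sigma$ and $a\poa a'$ then $(\rho,a')\in\sigma$), so that it suffices to exhibit \emph{some} sound abstract value below $\dena C(b,\pi)$. Throughout, the flags are handled through their characterisation by vanishing conditions: $q$ is in the standard basis of $\rho$ iff $P^\true_q\rho P^\false_q=P^\false_q\rho P^\true_q=0$, and in the diagonal basis iff the $H_q$-conjugate conditions hold; thus the flag of a state records exactly which of these two conditions are satisfied.

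For the atomic commands the partition is unchanged except in one clause of $\cnot$, so separability is routine and the work lies in the flags. For $\sk$ both semantics are the identity. For a one-qubit unitary $U\in\{\sigma_x,\sigma_y,\sigma_z,H,T\}$ on $q$, $U_q\rho U_q^\dagger$ stays $\pi$-separable because $U_q$ acts inside the block of $\pi$ containing $q$; the flag update is then verified from the vanishing conditions, using that the Pauli gates leave both conditions invariant (so both flags are preserved), that $H_q$ exchanges the standard and the diagonal condition (so it swaps ${\bf s}$ and ${\bf d}$ while fixing $\bot$ and $\top$), and that $T_q$ preserves the standard condition but destroys the diagonal one (which accounts for the ${\bf d}\mapsto\top$ and $\bot\mapsto{\bf s}$ entries). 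The $\cnot$ case is the heart of the argument and the main obstacle. Using the decomposition of $\cnot$ on $(q_1,q_2)$ as conjugation by $P^\true_{q_1}\otimes\id+P^\false_{q_1}\otimes\sigma_x$, I would establish that (i) when $q_1$ is in the standard basis, $\rho$ is block-diagonal in the standard basis of $q_1$ and conjugation collapses to a classically controlled $\sigma_x$ on $q_2$, a convex combination of local operations that neither entangles $q_1$ and $q_2$ nor disturbs their flags; and (ii) when $q_2$ is in the diagonal basis, the Hadamard-conjugate identity that turns $\cnot$ on $(q_1,q_2)$ into $\cnot$ on $(q_2,q_1)$ reduces this to case (i) with control and target exchanged, so again no entanglement is created. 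These justify the first clause; the intermediate clauses amount to checking the surviving flags of $q_1,q_2$ case by case, and the final clause is sound because $\pi\joina[q_1,q_2]$-separability is implied by $\pi$-separability and $\top$ dominates every flag.

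For sequential composition the claim follows from two applications of the induction hypothesis, since both semantics compose in the same order. The conditional and the loop both rest on a measurement soundness observation: if $(\rho,(b,\pi))\in\sigma$ then $P^\true_q\rho P^\true_q$ and $P^\false_q\rho P^\false_q$ are each $(\pi\setminus q)$-separable with $q$ put in the standard basis, so both are sound for $\Fa_q(b,\pi)=(b^{q\mapsto{\bf s}},\pi\setminus q)$. For $\ite q{C_1}{C_2}$ I apply the induction hypothesis to $C_1$ on $P^\true_q\rho P^\true_q$ and to $C_2$ on $P^\false_q\rho P^\false_q$, obtaining two sound pairs, and combine them with the join lemma, which is exactly what the abstract clause $\dena{C_1}(b^{q\mapsto{\bf s}},\pi\setminus q)\joina\dena{C_2}(b^{q\mapsto{\bf s}},\pi\setminus q)$ computes.

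Finally, the loop is handled by a limit argument over the explicit series expansions. Writing $T_n=\bigl(F_{P^\false}\circ(\den C\circ F_{P^\true})^n\bigr)(\rho)$ and $B_n=\bigl(\Fa_q\circ(\dena C\circ\Fa_q)^n\bigr)(b,\pi)$, measurement soundness together with $n$ applications of the induction hypothesis for the body give $(T_n,B_n)\in\sigma$ for every $n$, and the join lemma then makes each partial sum $\sum_{n\le N}T_n$ sound for $\Joina_{n\le N}B_n$. Since $\AAA^Q$ is finite the abstract join stabilises to $\dena{\while q C}(b,\pi)$ after finitely many steps, while the concrete partial sums form a $\po$-increasing chain whose supremum is $\den{\while q C}(\rho)$. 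The remaining and most delicate point is the admissibility of $\sigma$: the limit of the chain must stay in the relation. This holds because the conjuncts defining $\sigma$ are closed conditions, the $\pi$-separable matrices forming a topologically closed convex set and each flag constraint being the vanishing of fixed entries of $\rho$, and because for a bounded increasing chain of density matrices the L\"owner supremum coincides with the ordinary limit. Together with the $\cnot$ algebra of the second paragraph, establishing this admissibility is where the real work lies.
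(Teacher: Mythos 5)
Your proposal is correct and takes essentially the same approach as the paper: the paper's own proof consists of the single sentence ``The proof is by induction on $C$,'' and your structural induction---the case analysis on $\cnot$ via the controlled-$\sigma_x$ decomposition and Hadamard conjugation, the measurement-soundness observation feeding the join lemma for conditionals, and the closedness/monotone-limit argument for the loop---supplies precisely the details that one-line proof leaves implicit. No gaps to flag.
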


\begin{proof} The proof is by induction on $C$.

\end{proof}

In other words, if $\rho$ is $\pi$-separable and $\beta(\rho)\le b$, then $\den C (\rho)$ is $\pi'$-separable and $\beta(\den C (\rho))\le b'$, where $(b',\pi')=\dena C (b,\pi)$.

\section{Conclusion and Perspectives}
In this paper, we have introduced the first quantum entanglement analysis based on abstract
interpretation. 
Since a classical domain for driving a sound and complete analysis of entanglement is exponentially large in the number of qubits, an abstract domain based on partitions has been  introduced.
Moreover, since the concrete 
domain of superoperators is not a lattice, no Galois connection can be established between concrete and abstract domains. However, despite the absence of best abstraction, the soundness of the 
entanglement analysis has been proved.

The abstract domain is not only composed of partitions of the memory, but also of  descriptions of the qubits which are in a basis state according to the standard or diagonal basis. Thanks to this additional information, the entanglement analysis is more subtle  than an analysis of interactions:  the $CNot$ transformation is not an entangling operation if the first qubit is in the standard basis or if the second qubit is in the diagonal basis. 

 A perspective, in order to reach a more precise entanglement analysis, is to introduce a more concrete abstract domain, adding for instance a third basis, since it is known that there are three mutually unbiased basis for each qubit.

A simple quantum imperative language is considered in this paper. This
language is expressive enough to encode any quantum
evolution. However, a perspective is to develop such abstract
interpretation in a more general setting 
allowing high-order functions, representation of
classical variables, or unbounded quantum memory. 
The objective is also to provide a
practical tool for analysing entanglement evolution of more
sophisticated programs, like Shor's algorithm for factorisation 
\cite{shor94}.

Another perspective is to consider that a quantum computer is
available for driving the entanglement analysis. Notice that such an analysis of entanglement evolution is not trivial,
even if a quantum computer is available, since a tomography \cite{White:07} is required to
know the entanglement of the quantum memory state\footnote{It mainly
means that in order to obtain an {approximation} of the quantum memory
entanglement, several copies of the memory state are consumed.}.

\section{Acknowledgements}
The author would like to thank Philippe Jorrand and Fr\'ed\'eric Prost for fruitful
discussions. 
The author is supported  by EC STREP FP6-033763 Foundational Structures for Quantum Information and Computation (QICS).

\end{document}